\tikzset{>={Latex[width=3mm,length=3mm]}}
\newcommand{\Frontier}{\ensuremath{\mathit{Open}}\xspace}
\newcommand{\fs}[1]{\fontsize{#1}{#1}\selectfont}
\newcommand{\angled}[1]{\ensuremath{\langle #1\rangle}}
\newcommand{\Succ}{\ensuremath{\mathit{Succ}}}
\newcommand{\sstart}{\ensuremath{s_{\mathit{start}}}\xspace}
\renewcommand{\epsilon}{\varepsilon}
\newcommand{\sem}[1]{\llbracket{} #1 \rrbracket{}}
\renewcommand{\phi}{\varphi}
\newcommand{\I}{\mathcal{I}}
\newcommand{\Lit}{\mathcal{L}}
\newcommand{\G}{\mathcal{G}}
\newcommand{\adoc}{\textit{adoc}}
\newcommand{\init}{\textit{init}}
\newcommand{\Seen}{\textit{Seen}}
\newcommand{\Gtemp}{G_\textit{temp}}
\newcommand{\Gastar}{\ensuremath{G_{\text{A*}}}\xspace}
\newcommand{\ignore}[1]{}
\newcommand{\MS}{\texttt{M.Stonebraker}}
\newcommand{\qA}{\textbf{q\_Coauthor}\xspace} 
\newcommand{\qB}{\textbf{q\_Bacon}\xspace}    
\newcommand{\qC}{\textbf{q\_NATO}\xspace}     
\title{Evaluating navigational RDF queries over the Web}
\author{Jorge Baier, Dietrich Daroch, Juan L. Reutter, Domagoj Vrgo\v{c}}
\affiliation{Pontifica Universidad Cat\'olica de Chile and Center for Semantic Web Research}
\begin{abstract}
Semantic Web, and its underlying data format RDF, lend themselves naturally to navigational querying due to their graph-like structure. This is particularly evident when considering RDF data on the Web, where various separately published datasets reference each other and form a giant graph known as the Web of Linked Data. And while navigational queries over singular RDF datasets are supported through SPARQL property paths, not much is known about evaluating them over Linked Data. In this paper we propose a method for evaluating property path queries over the Web based on the classical AI search algorithm A*, show its optimality in the open world setting of the Web, and test it using real world queries which access a variety of RDF datasets available online and that are not necessarily known in advance.
\end{abstract}
\begin{document}

\maketitle

\sloppy

\section{Introduction}
\label{sec:Introduction}

The Resource Description Framework (RDF) \cite{manola2004rdf} is the World Wide Web consortium (W3C) standard for representing Semantic Web data. 
In essence, an RDF graph is a set of triples of internationalised resource identifiers (IRIs), where the first and last of them represent entity resources, and the middle one relates these resources, just as is it done in graph databases \cite{AG08}. The official query language for RDF databases is SPARQL \cite{sparql11}. 

To answer the need for including navigational features into SPARQL, the latest version of the language includes \emph{property paths}, 
a set of queries that can be seen as the analogues of established graph database languages such as regular path queries and two-way regular path queries \cite{CGLV00}. Consequently, property paths are already supported by the vast majority of existing SPARQL engines (e.g., \cite{jena,virtuoso,blazegraph}). 
The inclusion of navigational queries is also present in most other graph database models (see e.g. \cite{B13,AG08}).


Besides the traditional approach where one issues a query over a (set of) graph databases, 
the community has further raised the need for a fundamentally different way of querying 
RDF data: 
to obtain answers of queries over the whole corpus of RDF data present on the Web and linked together into what is known as the {\em Web of Linked Data}, in a distributed 
way and without assuming any mediation nor centralised organisation in control of the data, following the \emph{Linked Data Principles} \cite{BLBH09}. 


The fundamental property of RDF data that makes this querying possible is that the IRIs in RDF documents published online should be \emph{dereferenceable}.
This basically means that by accessing any given IRI, we obtain a new RDF document describing its neighbourhood (or a part of it) in the Linked Data graph.
Let us explain how this works using the online RDF documents published by DBLP, one of the simplest datasets now forming part of the Web of Linked Data. 
In the RDF representation of DBLP, each researcher is given a unique IRI, as
well as each paper. The authorship relation indicating that an author $A$ wrote
a paper $P$ is then represented by the triple $\{P,\texttt{dc:creator},A\}$.
The IRI for each author, then serves as a good starting point for investigating the DBLP dataset, as dereferencing their IRI will intuitively give us all the 
papers written by this author. 
For example, if we dereference the IRI \MS
, representing Michael Stonebraker, we obtain a document containing, amongst other things, the following triples 

\smallskip
{\centering \small
\begin{tabular}{ccc}
\MS & \texttt{foaf:name} & ``M. Stonebraker"  \\
\texttt{inTods:StonebrakerWKH76} & \texttt{dc:creator} & \MS  \\
\texttt{inSigmod:PavloPRADMS09} & \texttt{dc:creator} & \MS  
\end{tabular}}
\smallskip

These triples indicate that  \MS\ is the author of the papers represented by IRIs \texttt{inTods:StonebrakerWKH76} and \texttt{inSigmod:PavloPRADMS09}, 
and that the
name of the entity represented by \MS\  is indeed ``Michael Stonebraker". 
Suppose now that we need to retrieve the names of all the co-authors of Michael Stonebraker. It is very easy to do this using the linked data infrastructure: We first dereference the IRI \MS, obtaining an RDF document that contains, in particular, a triple $\{P,\texttt{dc:creator},\MS\}$ for each paper $P$ authored by M. Stonebraker. Then we just need to dereference each of the IRIs of these papers: dereferencing each 
of these IRIs $P$ gives us triples of the form  $\{P,\texttt{dc:creator},A\}$, and now we know that $A$ is a coauthor of M. Stonebraker. The last step is to further dereference the IRI of each of these researchers, to look for a triple $\{A,\texttt{foaf:name},N\}$ that indicates the name of the researcher (in this case $N$). 

Of course, the query looking for co-authors of Michael Stonebraker can be seen as a fixed pattern: namely, it is a path of length two, starting in the IRI \MS\ and traversing the edge \texttt{dc:creator} backwards (thus reaching a paper written by Michael Stonebraker), and then traversing the \texttt{dc:creator} edge forwards to reach one of his co-authors. But what happens when we want to generalise this query and obtain the collaboration reach of Michael Stonebraker, that is, his co-authors, the co-authors of his co-authors, their co-authors, etc? This is similar to the popular notion of Erd\H{o}s number, but this time starting with a different author. To answer such a query a fixed length path will no longer suffice, since we do not know the distance between the starting node and the ending node in advance. We therefore need to use property paths; in this case  
this would be done using the query

\medskip
{\small
\begin{tabular}{ccc}
\MS & \texttt{(\textasciicircum dc:creator/dc:creator)*} & \texttt{?x},
\end{tabular}}
\medskip

\noindent which repeats the simple path from one author to a paper (using \texttt{\textasciicircum dc:creator} to follow an edge labelled \texttt{dc:creator} in a reverse direction) and then to another author (using \texttt{dc:creator}) an arbitrary number of times, as signified by the star operator \texttt{*}. The idea is as before, but now once a co-author is retrieved, search does not stop, but continues with this (co-)author as the starting node. 

When evaluating these queries we have only dereferenced and fetched the documents that we needed in order to answer the query, and thus we are taking full advantage of the nature of Linked Data. There is another fundamental advantage of this approach: we can cross between different domains without any effort by using the infrastructure of the Web, which happens when a dereferenced IRI links to another IRI residing on a different server. 
This is in contrast with, for instance, issuing a single distributed query to a centralised endpoint, since we can access an arbitrary number of different sources.
Furthermore, we can access data that is not 
published on dedicated endpoints; all that we need is data published on the standard Web architecture. Up to our best knowledge, this framework of distributed, decentralised and ungoverned querying has not been considered before the advent of Linked Data. 

The advantages of these approaches have led the Semantic Web community to 
investigate the fundamentals of querying over the Web \cite{HBF09}, and developing  algorithms for answering SPARQL queries over Linked Data \cite{HartigO16,UmbrichHPD15}.  
%
Unfortunately, despite the potential that property paths could have in Web querying, most of the algorithms developed in this context focus on the pattern matching features of SPARQL, and do not consider property paths. Indeed, the majority of studies about property paths only consider how they work over a single centralised dataset \cite{GubichevBS13,YakovetsGG16,KRRV15,ACP12}. And while the need for understanding how property paths might work over the Web has repeatedly been raised by the research community \cite{HP15,HPe15,BaierDRV16}, previous studies have mostly  focused on understanding appropriate semantics and/or proposing new languages to help users navigate the Web, instead of describing the algorithms computing these answers. The only exception is \cite{FPG15}, suggesting a basic depth-first search algorithm in the context of 
NautiLOD queries: a language proposal that extends property paths. Therefore, the main objective of this paper is to answer the question: {\em How can one efficiently evaluate property path queries over the Web of Linked Data?}

%

\medskip
\noindent
\textbf{Contributions.}
%
Our main contribution is an algorithm for efficiently retrieving answers to property path queries over Linked Data. Our solution is based on the observation that evaluating property paths can be seen as a search problem over an initially unknown graph. Indeed, in the examples above we start from one known IRI (\MS) and begin exploring its neighbours guided by the query we are trying to answer. But this problem has been well studied by the Artificial Intelligence community, and it is generally agreed that the most appropriate solution here is an heuristic-search algorithm such as A* \cite{HartNR68,Pearl84}. In this paper we propose a variant of A* for the setting of Linked Data by using the property path we are trying to answer as a heuristic to guide our search. The main advantages of this approach are the following:
\begin{itemize}[leftmargin=*]
   \setlength{\parskip}{0pt}
 \setlength{\itemsep}{0pt}
 \setlength{\topsep}{0pt}
 \setlength{\partopsep}{0pt}
 \setlength{\labelwidth}{0pt}
 \addtolength{\leftmargin}{-10em}
 \setlength{\itemsep}{0pt}
\item[-] It allows to overcome shortcomings of basic graph traversal algorithms such as depth-first search (DFS) and breadth-first search (BFS). In fact, we show that A* dominates BFS and DFS, and that it is optimal with respect to the part of the graph that became available during the search. This, in some sense, is the best we can hope for in the open-world setting of the Web.
\item[-] It does not only allow to find pairs of nodes connected by a property path, but it can also return (one of the) shortest paths which witness this connection: a feature that existing SPARQL engines are currently lacking. 
\item[-] It is very robust when evaluating property paths live over the Web infrastructure, and can often answer queries which fail even on SPARQL 
implementations executed over a local dataset. 

\end{itemize}

Apart from describing the basic implementation of the A* algorithm and proving its optimality, we also develop several optimisations geared towards query answering in the Linked Data setting. Most notably, we show that dereferencing multiple IRIs in parallel can speed up the computation of property paths significantly. 
Finally, we describe how our implementation runs over the Web of Linked data using a number of real-world queries which utilise different RDF datasets. 
We compare our approach to BFS and DFS-based algorithms and their parallel versions, showing that A* is superior when it comes to querying over the Web. 

\smallskip
\noindent
\textbf{Outline.} We formalise Linked Data and property paths in Section \ref{sec:Preliminaries}. In Section \ref{sec:ComputingPathQueries} we describe how DFS and BFS can be used to answer property path queries and what are their shortcomings.  In Section \ref{sec:AISearch} we introduce the A* algorithm and show its optimality. Optimisations are presented in Section \ref{sec:Optimisations}, and real-world experiments in Section \ref{sec:Evaluation}. 
We conclude in Section \ref{sec:Conclusions}.

\section{Preliminaries}
\label{sec:Preliminaries}


\noindent{\bf RDF graphs.}
Let $\I$ and $\Lit$ be countably infinite disjoint sets of \emph{IRIs} and  \emph{literals}, respectively.
An \emph{RDF triple} is a triple $(s, p, o)$  from $(\I \cup \Lit) \times \I \times (\I \cup \Lit)$, where $s$ is called \emph{subject}, $p$ \emph{predicate}, and $o$ \emph{object}. An \emph{(RDF) graph} is a finite set of RDF triples. For simplicity we only deal with RDF documents that do not contain blank nodes.

\smallskip

\noindent{\bf Linked Data.}
We are interested in computing navigational queries over the 
wide body of RDF documents published on the Web that comprise what is known as the Web of Linked Data. 
As customary in the literature (see e.g.~\cite{AV97, H12}), we treat this corpus of documents as 
a tuple $W = (\G,\adoc)$, where $\G$ is a set of RDF graphs and $\adoc: \I \rightarrow \G \cup \{\emptyset\}$
is a function that assigns graphs in $\G$ to some IRIs, and the empty graph to the rest of the IRIs. Note that previous work (e.g. \cite{H12}) usually defines $\adoc$ as a partial function. We adopt instead the convention that $\adoc(u) = \emptyset$ whenever
$\adoc$ is not defined for $u$, as it simplifies the presentation.

The intuition behind this definition is that $\G$ represents the set of documents on the Web of Linked
data, and $\adoc$ captures dereferencing; that is, $\adoc(u)$ gives us the neighbours of $u$ in $\G$. Note that $\G$ is usually not available
and has to be retrieved by looking up IRIs with $\adoc$.
\begin{example}
We can now formalise the operations performed in the introduction over the linked data architecture of DBLP. 
Starting with the IRI {\em \MS}, we can invoke $\adoc$ on this IRI to fetch its associated graph
\begin{multline*}
{\scriptsize \adoc(\text{\em \MS}) = } \\ 
$\ \ \ \ \ \ \ \ \ \  $  \begin{cases} 
{\scriptsize
\begin{tabular}{ccc}
\em \MS & \em \texttt{foaf:name} & \em ``M. Stonebraker"  \\
\em \texttt{inSigmod:PavloPRADMS09} & \em \texttt{dc:creator} & \em \MS  \\
\vdots & \vdots & \vdots
\end{tabular}}\end{cases}
\end{multline*}
When looking for the coauthors of M. Stonebraker,  
we might want to fetch $\adoc(\text{\em \texttt{inSigmod:PavloPRADMS09}})$, which will give us a graph containing, amongst other things, 
triples of the form $(\text{\em \texttt{inSigmod:PavloPRADMS09}}, \text{\em \texttt{dc:creator}}, A)$, with $A$ being the IRI of the authors of the paper.
To get the name of $A$ we fetch the graph $\adoc(A)$ and look for the triple with {\em \texttt{foaf:name}} as the predicate.  
\end{example}





\noindent{\bf Property Paths.}
Navigational queries over graph databases commonly ask for paths that satisfy certain properties.
The most simple of them
correspond to {\em regular path queries}, or {\em RPQs\/}~\cite{ABS99,CMW87}, which select pairs of nodes  connected by a path
conforming to a regular expression, and \emph{2-way} regular path queries, or 2RPQs \cite{CGLV00}, which extend RPQ with the ability to traverse an edge backwards. 
SPARQL 
features a class of navigational queries known as \emph{property paths}, which are
themselves an extension of the well known class of 2RPQs. For readability we assume we deal only with 2RPQs, adopting the 
formalisation in~\cite{KRRV15}. Note however that our algorithms (and our implementation) work for all property path expressions. 

Formally, we define property paths by the grammar
$$
e \ := \ u \mid e^- \mid e_1\cdot e_2 \mid e_1 + e_2 \mid e^* \mid e?,
$$
where $u$ is an IRI in $\I$. 
The semantics of property paths,  denoted by $\sem{e}_G$, for a property path $e$ and an RDF graph $G$, is shown below. 
{
\small
  \begin{align*}
    \sem{a}_G                        & = \{(s,o) \mid (s,a,o)\in G\},       \\
    \sem{e^-}_G                      & = \{(s,o) \mid (o,s)\in \sem{e}_G\}, \\
    \sem{e_1 \cdot e_2}_G            & = \sem{e_1}_G \circ \sem{e_2}_G,     \\
    \sem{e_1 + e_2}_G                & = \sem{e_1}_G\cup \sem{e_2}_G,       \\
    \sem{e^*}_G                      & = \bigcup_{i \geq 1} \sem{e^i}_G \cup \{(a,a) \mid a \text{ is a term in } G\}, \\
    \sem{e?}_G                       & = \sem{e}_G \cup \{(a,a) \mid a \text{ is a term in } G\}. 
      \end{align*}
      }
Here $\circ$ is the usual composition of binary relations, and $e^i$ is the concatenation of $i$ copies of $e$.



\subsection{Evaluating Property Paths via Automata}\label{sec:automata}
As in the case of the query computing the coauthor reach of M. Stonebraker, one is usually interested in computing all the IRIs that can be reached 
from a starting IRI $u$ by means of a property path expression. Formally, we study the following problem. 
\begin{center}
\fbox{
\begin{tabular}{ll}
{Problem}: & {\sc PPComputation}\\ 
{Input}: & 
Property Path $e$, RDF graph $G$, starting IRI $u$  \\
{Output}: & All IRIs $v$ such that $(u,v) \in \sem{e}_G$
\end{tabular}
}
\end{center}

Alternatively, one may wish to compute the full evaluation $\sem{e}_G$ of pairs connected via a path conforming to $e$. However, this operation is seldom used in practice:
it is not an intuitive query to ask, and when using property paths in SPARQL one usually obtains starting points from other patterns or joins of patterns. Also, computing the full $\sem{e}_G$ is not even supported in all SPARQL systems (for instance Virtuoso allows only property paths with a starting point).  Furthermore, as we will see in the following sections, in the open world setting of Linked Data it is only natural to have a starting point for our search, since it is unrealistic to expect the computation to traverse and manipulate the entire Web graph. This is why we chose to focus on
{\sc PPComputation}.



To solve the {\sc PPComputation} problem, the theoretical literature proposed a simple algorithm based on automata theory. 
To present this algorithm, note first that our property paths are nothing more than regular expressions over the alphabet $\I^\pm = \I \cup \{u^- \mid u \in \I\}$ that contains all IRIs and their inverses. 
Thus, for each property path $e$ we can construct a nondeterministic finite state automaton (NFA) $A_e$ over $\I^\pm$ that accepts the same language as $e$, when viewed as a regular expression. We can now show:
\begin{proposition}[\cite{CMW87,CGLV00,KRRV15}]
\label{prop-query-eval}
{\sc PPComputation} can be solved in $O(|G|\cdot|e|)$ (thus linear in both the size of the graph and the query).
\end{proposition}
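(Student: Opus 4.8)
The plan is to reduce \textsc{PPComputation} to a reachability problem in a product graph, following the standard automata-theoretic approach for regular path queries. First I would take the NFA $A_e$ over $\I^\pm$ already mentioned in the text; via Thompson's construction it has $O(|e|)$ states and $O(|e|)$ transitions, including $\epsilon$-transitions. The core object is the product graph of $G$ and $A_e$, whose vertices are pairs $(n,q)$ with $n$ a term occurring in $G$ and $q$ a state of $A_e$. I would place an edge from $(n_1,q_1)$ to $(n_2,q_2)$ exactly when the automaton has a matching move synchronised with $G$: for a transition $q_1 \xrightarrow{a} q_2$ on a forward IRI $a$ there must be a triple $(n_1,a,n_2)\in G$; for a transition on an inverse symbol $a^-$ there must be a triple $(n_2,a,n_1)\in G$; and for an $\epsilon$-transition $q_1 \to q_2$ I add a ``vertical'' edge $(n,q_1)\to(n,q_2)$ for every term $n$.

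The reduction is then to run a single graph traversal (BFS or DFS) from $(u,q_0)$, where $q_0$ is the initial state of $A_e$, and output every term $v$ such that some pair $(v,q_f)$ with $q_f$ accepting is reached. Correctness rests on the standard run--path correspondence: a word $w=a_1\cdots a_k$ is accepted by $A_e$ \emph{and} labels a path from $u$ to $v$ in $G$ (traversing each edge forwards or backwards according to whether $a_i$ is an IRI or its inverse) if and only if the synchronised run reaches $(v,q_f)$ from $(u,q_0)$. I would prove this by induction on the structure of $e$ (equivalently on the length of the run), matching each of the six clauses of the semantics $\sem{\cdot}_G$ against the corresponding fragment of the Thompson automaton. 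The one place to be careful is the reflexive pairs $\{(a,a)\}$ contributed by $e^*$ and $e?$: these are captured because the corresponding automaton accepts the empty word from an accepting initial state, so $(u,u)$ is output precisely when $u$ occurs as a term in $G$, matching the semantics.

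For the complexity bound, the product graph has $O(|G|\cdot|e|)$ vertices, since $G$ has $O(|G|)$ terms and $A_e$ has $O(|e|)$ states. The crux is the edge count: each symbol transition on IRI $a$ matches only the triples of $G$ whose predicate is $a$, and each $\epsilon$-transition contributes one vertical edge per term, so charging every product edge to a (transition, triple) or (transition, term) pair shows the total is $O(|G|\cdot|e|)$ as well. A BFS/DFS over a graph of this size runs in time linear in vertices plus edges, giving the claimed $O(|G|\cdot|e|)$, and the product graph can be built lazily during the traversal so it is never materialised beyond what the search touches. I expect this complexity bookkeeping, rather than the correctness argument, to be the main obstacle: one must confirm that inverse edges and $\epsilon$-transitions do not inflate the edge count past the product of the two sizes.
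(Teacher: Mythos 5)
Your proposal is correct and follows essentially the same route the paper takes: build the product of $G$ with the NFA $A_e$, invoke the run--path correspondence (the paper's Lemma~\ref{lema-query-eval}), and solve {\sc PPComputation} by a linear-time traversal from $(u,q^0_e)$ collecting all reachable $(v,q_f)$ with $q_f$ accepting. Your extra bookkeeping on $\epsilon$-transitions and the $O(|G|\cdot|e|)$ edge count just makes explicit what the paper leaves to the cited references.
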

The idea is as follows. Let $G$ be an RDF graph, $e$ a property path expression and $u$ an IRI. First, we construct the automaton 
$A_e = (Q_e,\I^\pm,q^0_e,F_e,\delta_e)$ equivalent to the query $e$, where $Q_e$ is the set of states, $q^0_e$ is the initial state, $F$ is the set of final states and 
$\delta_e \subseteq Q_e \times \I^\pm \times Q_e$ is the transition relation. 
Next, from $G$ and $A_e$ we construct the labelled product graph $G \times A_e$ whose nodes come from $\I \times Q_e$, and there is an edge 
from a node $(u_1,q_1)$ to a node $(u_2,q_2)$ labelled with $a\in \I$ if and only if (i) $G$ contains a triple $(u_1,a,u_2)$ and (ii) 
the transition relation $\delta_e$ contains the triple $(q_1,a,q_2)$, that is, if in $A_e$ one can advance from $q_1$ to $q_2$ while reading $a$. Similarly, there is an edge between $(u_1,q_1)$ and $(u_2,q_2)$ labelled with $a^-\in \I^-$ if (i) $(u_2,a,u_1)\in G$ and (ii) $(q_1,a^-,q_2)\in \delta_e$. It is now not difficult to show the following property: 

\begin{lemma}[\cite{CMW87}]
\label{lema-query-eval}
A pair $(u,v)$ belongs to $\sem{e}_G$ if and only if there is a path from $(u,q^0_e)$ to 
$(v,q^f_e)$ in the labelled graph $G \times A_e$, where $q^f_e \in F_e$ is a final state of $A_e$. 
\end{lemma}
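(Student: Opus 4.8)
The plan is to prove the biconditional in Lemma~\ref{lema-query-eval} by induction on the structure of the property path expression $e$, mirroring the inductive definition of the semantics $\sem{e}_G$ together with the standard inductive (Thompson-style) construction of the automaton $A_e$. The key observation is that a path in the product graph $G \times A_e$ from $(u,q^0_e)$ to $(v,q^f_e)$ records, in its sequence of edge labels, a word $w \in (\I^\pm)^*$ that is simultaneously (i) accepted by $A_e$, since the second coordinates trace an accepting run of $A_e$ on $w$, and (ii) spelled out by an actual path in $G$ (reading IRIs forward and inverses backward), since the first coordinates form a genuine $w$-labelled walk in $G$. So the proof reduces to establishing the following invariant, which I would state as a strengthened lemma: for every property path $e$, every pair of IRIs $(s,o)$, and every pair of states $q,q' \in Q_e$, there is a path from $(s,q)$ to $(o,q')$ in $G \times A_e$ using only transitions of the sub-automaton for $e$ \emph{if and only if} there is a word $w \in L(e)$ with $(s,o) \in \sem{w}_G$ and an $A_e$-run on $w$ from $q$ to $q'$.

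First I would set up the equivalence $(u,v) \in \sem{e}_G \iff \exists w \in L(e)$ such that $(u,v) \in \sem{w}_G$, where $L(e)$ is the language of $e$ viewed as a regular expression over $\I^\pm$ and $\sem{w}_G$ is the path relation for the single word $w$ (composing the atomic relations $\sem{a}_G$ and $\sem{a^-}_G$ along $w$). This is immediate from unfolding the semantics in the preliminaries, since each regular operator ($\cdot$, $+$, $*$, $?$, $^-$) on $e$ corresponds exactly to the matching operation (concatenation, union, Kleene star, optional, reversal) on the underlying languages, and $\sem{e}_G = \bigcup_{w \in L(e)} \sem{w}_G$. Second, I would handle the base case: for a single symbol $a \in \I$, the edge relation of $G \times A_e$ was \emph{defined} precisely so that an edge from $(u_1,q_1)$ to $(u_2,q_2)$ labelled $a$ exists iff $(u_1,a,u_2) \in G$ and $(q_1,a,q_2) \in \delta_e$, and symmetrically for $a^-$ using $(u_2,a,u_1) \in G$; this makes the single-edge case of the invariant hold by construction.

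The inductive step is where the two coordinates must be kept in lockstep. For the forward direction ($\sem{e}_G$ to a product path), given $(u,v) \in \sem{e}_G$ there is $w = a_1 \cdots a_n \in L(e)$ and a sequence of IRIs $u = x_0, x_1, \dots, x_n = v$ witnessing $(u,v) \in \sem{w}_G$; since $w \in L(e) = L(A_e)$, there is an accepting run $q^0_e = p_0, p_1, \dots, p_n = q^f_e$ of $A_e$ on $w$, and I would then verify that $(x_0,p_0), (x_1,p_1), \dots, (x_n,p_n)$ is a legal path in $G \times A_e$ by checking each consecutive pair against the edge definition. For the converse, a path $(x_0,p_0), \dots, (x_n,p_n)$ in $G \times A_e$ from $(u,q^0_e)$ to $(v,q^f_e)$ yields a label word $w = a_1 \cdots a_n$; projecting to second coordinates gives an accepting run of $A_e$, so $w \in L(e)$, while projecting to first coordinates gives a $w$-walk in $G$, so $(u,v) \in \sem{w}_G \subseteq \sem{e}_G$.

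The main obstacle I anticipate is bookkeeping rather than conceptual depth: I must be careful that the states of the composite automaton $A_e$ are partitioned cleanly among the sub-expressions so that a path through the product graph cannot illegitimately ``jump'' between the automaton fragments for different subformulas except through the $\varepsilon$-transitions introduced by the Thompson construction. Since $A_e$ is an NFA built with $\varepsilon$-transitions, the product graph strictly speaking needs those $\varepsilon$-moves to change the state coordinate without consuming an edge of $G$; I would note that these correspond to staying at the same IRI, so they are harmless for the path relation, and I would fold them into the invariant by allowing $\varepsilon$-labelled vertical moves in $G \times A_e$. Handling the Kleene star is the subtlest piece, as it requires an inner induction on the number of iterations to match the infinite union $\bigcup_{i \ge 1} \sem{e^i}_G$ against the loop in the automaton, together with a note that the reflexive pairs $\{(a,a)\}$ from the semantics of $e^*$ and $e?$ correspond exactly to the empty run at a single product node. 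Once the $\varepsilon$-transition convention is fixed, each inductive case is a routine verification, so I would present the star case in full and abbreviate the others.
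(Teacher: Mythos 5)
The paper does not actually prove this lemma: it is imported from \cite{CMW87} with the remark that it is ``not difficult to show,'' so there is no in-paper proof to compare against. Your argument is the standard one and is correct in substance. It really consists of two independent pieces: (1) $\sem{e}_G=\bigcup_{w\in L(e)}\sem{w}_G$, proved by structural induction on $e$ (this is where the case analysis over $\cdot$, $+$, $^*$, $?$, $^-$ lives); and (2) a ``zip'' lemma for an arbitrary NFA $A$ with $L(A)=L(e)$, namely that walks in $G\times A$ are exactly interleavings of a $w$-walk in $G$ with a run of $A$ on $w$, proved by induction on the length of the path/word. Your third paragraph carries out exactly this, and it goes through.

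Two points to tighten. First, your ``strengthened lemma'' is false as literally stated: for arbitrary states $q,q'$ you require a word $w\in L(e)$ together with a run on $w$ from $q$ to $q'$, but for, say, $e=a\cdot b$ there is a product edge from $(s,q_0)$ to $(o,q_1)$ witnessed by the one-letter word $a\notin L(e)$. The correct invariant quantifies over arbitrary $w\in(\I^\pm)^*$ and only imposes $w\in L(e)$ in the special case $q=q^0_e$, $q'\in F_e$; this is the version your actual argument uses, so fix the statement rather than the proof. Second, the Thompson-construction and $\varepsilon$-transition machinery is unnecessary overhead here: the paper defines $A_e$ as an $\varepsilon$-free NFA with $\delta_e\subseteq Q_e\times\I^\pm\times Q_e$ and defines the product for that automaton as a whole, so the zip lemma needs no partitioning of states among subexpressions and no $\varepsilon$-labelled vertical moves; the only structural induction needed is the purely language-theoretic item (1). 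Finally, a small corner case you gloss over: the reflexive pairs in $\sem{e^*}_G$ and $\sem{e?}_G$ are restricted to terms occurring in $G$, whereas a zero-length path from $(u,q^0_e)$ to itself exists in $G\times A_e$ for every $u\in\I$ when $q^0_e\in F_e$; the lemma as stated in the paper inherits this same mismatch, but if you want a fully rigorous proof you should either restrict the product graph's nodes to terms of $G$ or note the discrepancy explicitly.
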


We can now solve the {\sc PPComputation} problem by traversing the product graph $G \times A_e$ starting in $(u,q^0_e)$ and returning all the IRIs $v$ such that we encounter a node $(v,q^f_e)$, with $q^f_e\in F_e$, during our traversal.
Thus, in a sense, one can recast the problem of query computation (in a single graph) as the problem of searching for all connected final nodes in the product graph. 
This duality between evaluation and search is a crucial component of our approach for querying multiple graphs on the Web of Linked Data.

\section{Computing property paths over the web}
\label{sec:ComputingPathQueries}

When computing the answer of a property path over the Web, we cannot simply rely on the algorithm outlined in Section \ref{sec:automata}, because this assumes that we have our entire graph in memory, which is not a feasible option for the case of the Web. 
Having a starting IRI $u$ comes in handy here, as we can emulate the algorithm from Section \ref{sec:automata} by dereferencing $u$, retrieving its neighbours in $\adoc(u)$, and continuing from there, thus building a local copy of a portion of the Web graph 
needed to  answer the query.
 
To formalise this, let us define the \emph{Web graph} $G_\text{Web}$ as the RDF graph consisting of the union $\bigcup_{u \in \I} \adoc(u)$ of all the graphs resulting by dereferencing an IRI in $\I$ (i.e. the complete Web of Linked Data). From here onwards we assume that $\adoc(u)$ gives us the neighbours of $u$ in the Web graph (see Section \ref{ss-endpoints} for a discussion of how to deal with the shortcomings of the current Linked Data infrastructure). The problem we are now interested in is solving {\sc PPComputation} above for the graph $G_\text{Web}$, i.e. the problem:
\begin{center}
\fbox{
\begin{tabular}{ll}
{Problem}: & {\sc PP\_over\_the\_Web}\\ 
{Input}: & 
Property Path $e$, starting IRI $u$  \\
{Output}: & All IRIs $v$ such that $(u,v) \in \sem{e}_{G_\text{Web}}$
\end{tabular}
}
\end{center}
Now, although the approach of Section \ref{sec:automata} would require us to do our search over $G_\text{Web}\times A_e$, we can recast {\sc PP\_over\_the\_Web} as finding paths inside a subgraph $G_P\subset G_\text{Web}\times A_e$ which is constructed dynamically by dereferencing IRIs starting at $u$. And although the graph $G_P$ might be much smaller (in fact, we can stop constructing it when we desire), selecting the best algorithm for producing this graph and doing path searching over it is not an obvious task, due to the following issues not occurring in the classical path-finding setting.


First, path-finding algorithms are designed to work with graphs that can be either stored in memory or generated efficiently. In contrast, graph $G_P$ is generated by dereferencing IRIs which involves resolving a number of HTTP requests. The time required to complete a request dominates significantly the time required to carry out any operation performed in memory. Efficient algorithms for this problem should therefore aim at reducing  network requests, a factor that is usually not considered when solving path-finding problems.
The second issue is that here we are interested in more than one solution. As such, 
an algorithm that returns answers incrementally seems to be a more sensible option than one that computes \emph{all} answers prior to returning any.



Next, we discuss how classical path-finding algorithms can be modified to return answers to property path queries over the Web and pinpoint some of their shortcomings in this setting.

\subsection{Depth-First Search}
\label{ssec:DFS}
\emph{Depth-First Search} (DFS) is an easy-to-implement path-finding  algorithm that can be used to solve the {\sc PP\_over\_the\_Web} problem. 
On input a starting IRI $u$ and an automaton $A_e$ over $\I^\pm$, the algorithm begins a search over the graph $G_\text{Web} \times A_e$ starting with the node 
$\init = (u,q^0_e)$, where $q^0_e$ is the initial node of $A_e$. The goal of the algorithm is to look for nodes of the form $(v,q_f)$, with $v$ an IRI and $q_f$ a final state 
of $A_e$; this is commonly known as the \emph{goal} condition of the algorithm. 
At every moment during execution, the algorithm maintains a \emph{search frontier} (or $\Frontier$ list) implemented as a stack. 
At initialisation, the frontier is set to only contain the start node $\init$. In the main loop, a node $s$ is extracted from the frontier and \emph{expanded} by computing its 
neighbours in $G_\text{Web} \times A_e$, by means of the function $\textit{Neighbours}$. All neighbouring goal nodes are returned, and then all neighbours that have not been previously added to the frontier are now inserted at the top of the frontier. The algorithm terminates unsuccessfully if the frontier empties.

\begin{algorithm}
  \fs{7}
\DontPrintSemicolon
\fs{7}
\SetKwData{updaterule}{updaterule}
\SetKwInOut{KwSideEffect}{Effect}

\SetKwFunction{Astar}{A*}
\SetKwFunction{BuildPath}{BuildPath}
\SetKwFunction{Observe}{Observe}
\SetKwFunction{Expand}{Expand}
\SetKwFunction{InsertState}{InsertState}
\SetKwFunction{GoalCondition}{GoalCondition}
\SetKwFunction{InitializeState}{InitializeState}
\SetKwData{Open}{Open}
\SetKwData{Closed}{Closed}
\SetKwData{Null}{null}
\SetKwFunction{Lookahead}{LookAhead}
\SetKwFunction{ExtractMinF}{Extract-Min-$f$}
\SetKwFunction{ExtractMinH}{Extract-Min-$h$}
\SetKwFunction{Insert}{InsertQ}
\SetKwFunction{BestState}{Extract-Best-State}
\SetKwFunction{Dijkstra}{ModifiedDijkstra}
\SetKwFunction{Update}{ReestablishConsitency}
\SetKwFunction{Main}{main}
\SetKwData{back}{back}
\SetKwFunction{Propagate}{Propagate}
\SetKwFor{Proc}{procedure}{}{end}
\SetKwFor{Func}{function}{}{end}
\SetKwFor{Foreach}{for each}{do}{end}
\SetKwFor{Pforeach}{parallel for each}{do}{end}

  \Func{$\textit{Search}(u,A_e)$}{
    $\init \leftarrow (u,q^0_e)$\;
    $\init.\textit{parent} \leftarrow null$\;
    \lIf{$q^0_e \in F_e$}{return $\init$ or add $\init$ to solutions}
    Initialise $\Frontier$ as an empty stack (DFS) or queue (BFS)\;
    Initialise $\Seen$ as an empty set\;
    Insert $\init$ into both $\Frontier$ and $\Seen$\;
        \label{search:init}
    \While{$\Frontier$ is not empty}{
      Extract node $s = (v,q)$ from $\Frontier$ and compute $\textit{Neighbours}(s)$\; 
      \Foreach{$t = (v',q')$ in $\textit{Neighbours}(s)$ that is not in $\Seen$ }{
        $\textit{t.parent}\leftarrow s$\; \label{search:setparent}
        \lIf{$q' \in F_e$}{return $t$ or add $t$ to solutions} \label{search:return}
        Insert $t$ into both $\Frontier$ and $\Seen$
      }
    }
  }
  \Func{$\textit{Neighbours}((v,q))$}{
     Initialise $\Succ$ as an empty set and RDF graph $\Gtemp$ as an empty graph\;
     $\Gtemp \leftarrow \adoc(v)$\; \label{search:extract}
      \Foreach{IRI $a \in \I$ and state $q'$ s.t.  $(q,a,q')$ is in $\delta_e$}{
	  \lForeach{triple $(v,a,v')$ in $\Gtemp$}{
	    Insert $(q',v')$ into $\Succ$
	}

      }
      \Foreach{IRI $a^- \in \I^-$ and state $q'$ s.t.  $(q,a^-,q')$ is in $\delta_e$}{
      \lForeach{triple $(v',a,v)$ in $\Gtemp$}{
	    Insert $(q',v')$ into $\Succ$
	}
      }
      \Return $\Succ$\;  \label{search:expand}
  }
  \caption{Breadth/Depth-First Search}\label{bfsdfs}
\end{algorithm}


A pseudo code for DFS is presented in Algorithm~\ref{bfsdfs}. Note that we need $\Frontier$ to be a stack for DFS (Line~\ref{search:init}). Observe additionally that the algorithm does not return a path but rather a node from which a path can be obtained by following the so-called parent pointers (set in Line~\ref{search:setparent}). Finally, observe that in the context of navigational query answering, computing the neighbours of a node (function $\textit{Neighbours}$) needs IRI dereferencing (set in Line~\ref{search:extract}) which in turn requires network communication, an operation that may take significantly more  time than others carried out by the algorithm, such as data management.

There are three properties of DFS that are important for query answering. First, DFS can be easily modified to return paths incrementally instead of only one path. Indeed, instead of returning in Line~\ref{search:return}, the node just found to be a goal node can be added to a list of solutions. In the same spirit, one can easily adapt DFS to return the first $k$ solutions by introducing $k$ as an additional parameter. Second, DFS is complete for finite graphs: if a goal node is reachable from $init$ then the algorithm eventually retrieves this node. This is important because it guarantees that all solutions to a query are eventually returned. 
Third, the memory footprint of DFS is relatively low. Actually, if the depth of the node on top of the stack is $k$ and the maximum branching factor (number of neighbours of a node) is $b$, then the size of $\Frontier$ is $O(kb)$.

To see how DFS works when solving {\sc PP\_over\_the\_Web}, let us consider the first steps taken when processing the property path $(\texttt{dc:creator}^- \cdot \texttt{dc:creator})*$, with the starting IRI $\MS$, which was presented  
in the introduction. First, let $A_e$ be the following NFA: 
\vspace{-5pt}
\begin{center}
\begin{tikzpicture}[>=stealth',auto,node distance=2cm]
  \node[initial,state,accepting,minimum size=0pt]   (q0)               {$q_0$};
  \node[state,minimum size=0pt]           (q1) [right of=q0] {$q_1$};

  \path[->]  (q0)  edge [bend left] node {\scriptsize $\texttt{dc:creator}^-$} (q1);
  \path[->]  (q1)  edge [bend left] node {\scriptsize $\texttt{dc:creator}$} (q0);
\end{tikzpicture}
\end{center}
\vspace{-5pt}
As explained in Lemma \ref{lema-query-eval}, the starting node for our search is $(\MS,q_0)$. In the first iteration we extract this node from $\Frontier$, dereference the IRI $\MS$, obtaining, amongst others, the triple (\texttt{inTods:StonebrakerWKH76}, \texttt{dc:creator}, \MS). This will allow us to add to our frontier the node $(\texttt{inTods:StonebrakerWKH76},q_1)$ and we proceed similarly for other triples in $\adoc(\MS)$. For the second iteration, let us assume $(\texttt{inTods:StonebrakerWKH76},q_1)$ is at the top of the stack. When expanded, DFS will lookup $\adoc(\texttt{inTods:StonebrakerWKH76})$ and 
retrieve, amongst other things, all nodes connected to \texttt{inTods:StonebrakerWKH76} by means of a label $\texttt{dc:creator}$. This, in particular, yields all 4 authors of this paper, but $(\MS,q_0)$ is not added to $\Frontier$ because 
it was already in $\Seen$.
For the next iteration DFS takes one of these nodes, say $(\texttt{G.Held},q_0)$, expands them again, obtaining all papers of G. Held; the next iteration expands one of these papers, adds all the authors to the list of answers; and so on.

Algorithm~\ref{bfsdfs} implements a \emph{loop detection} by preventing the insertion of a previously seen node to $\Frontier$. This is important to guarantee that the algorithm 
terminates over a finite graph and that the answers are complete. In our case this implies that we are looking for simple paths, albeit not in the RDF graph but in the product 
graph $G_\text{Web} \times A_e$. In practice this implies that our algorithm looks for paths where the same IRI may be repeated at most a number of times equivalent to the states of 
the expression automata $A_e$. Completeness of DFS in our context follows from a simple pumping argument and the fact that property paths are regular expressions over 
$\I^\pm$. 



The most notable drawback of DFS 
is that there is no guarantee on solution quality, and solutions with much shortest paths may be missed. For instance, in the query above it will return the co-authors of G. Held, which are at distance two or more from M. Stonebraker, before returning the other authors of the paper \texttt{inTods:StonebrakerWKH76}. 
In practice this means that we would need many more HTTP requests to retrieve subsequent solutions, which in turn means more time to compute answers.

\subsection{Breadth-First Search}
\label{ssec:BFS}

To alleviate the drawbacks of DFS, one could consider instead using 
{\em Breadth-First Search} (BFS), another complete search algorithm that is guaranteed to find shortest paths. BFS is similar to DFS in most aspects: it keeps a search frontier (i.e. the $\Frontier$ list) during execution and in each iteration it extracts a node from the frontier and then expands it. The most important difference is that BFS, instead of always expanding the deepest node in the frontier, it always expands the shallowest one. At the algorithmic level, BFS can be obtained from DFS by simply changing underlying data structure for \Frontier to a FIFO queue instead of a stack. As such, successors of a node are always added at the end of the queue, and therefore a shallow node is always selected for expansion. 

However, BFS also suffers from an important drawback in our context: BFS has the potential of needing many iterations to find a first solution to the problem. Indeed, assume once again that a node has at most $b$ neighbours, and imagine that the shortest path in the search graph has $k$ edges. Then, \emph{all} nodes that are reachable in less than $k$ edges are added to \Frontier which means that $O(b^k)$ iterations are needed before such a path is found. 

\subsection{Issues with BFS and DFS}
Both BFS and DFS have issues with some queries. Consider for example the following query, starting with 
M.~Stonebraker: 
 $$(\verb+dc:creator+^- \cdot \verb+dc:creator+)^* \cdot \verb+dc:creator+^- \cdot \verb+rdfs:label+,$$
that is, intuitively we want to retrieve the papers written by a co-author of M. Stonebraker, or by a co-author of some of his co-authors, and so on.  Furthermore, take the realistic assumption that there are hundreds of IRIs connected via $\verb+dc:creator+^-$ with M. Stonebraker (indeed, Stonebraker's DBLP entry, as of the writing of this paper, contains 298 papers).

Let us now focus on what BFS does with this query. It will first dereference the IRI for Stonebraker, adding the IRI of each of his 298 papers to \Frontier. Then, it will dereference each of these IRIs, which requires 298 requests over the network. When each of these IRIs are expanded, we add to \Frontier the co-authors of Stonebraker. Only after all the IRIs for Stonebraker's papers are expanded, it will expand the IRI of one Stonebraker's coauthors, and, immediately will find a solution path.

Waiting for 298 HTTP requests before obtaining the first answer is not sensible: in this case only three requests are needed to find the first answer. Indeed, starting from Stonebraker's IRI, we just choose the IRI for \emph{one} of Stonebraker's papers, we expand such an IRI, from where we choose the IRI of \emph{one} of his co-author's. After dereferencing the latter IRI we find the first solution.

DFS has different yet important issue with this very same query. To find a first
solution, DFS actually does the minimum amount of effort, dereferencing the
minimum number of IRIs, as described above. The issue appears when looking for
the answers that follow the first. Because the focus of DFS is depth, when executed over DBLP, the
$5^{th}$ answer of our query has length 6, the next 4 answers have length 12, and the
following ones 32 and up.
This implies that DFS will incur in more computation time to retrieve these answers, as well as more http requests. Moreover, returning these lengthy paths first does 
not seem intuitively right, as we normally want to display simpler, shorter paths first.  Indeed, it is not hard to contrive examples in which the length of solutions increases much faster than in our examples, even when many shorter solutions exist. 

What we need is a good balance between execution time and solution quality. In our example, a sensible way to proceed would be to take the IRI for the first paper, look at its authors, list them, and then proceed likewise with the second paper. This balance 
has been studied in the area of Heuristic Search, 
for many years, producing algorithms that are guided by a heuristic function $h$,  that is such that $h(s)$ estimates the cost of a path from $s$ to a goal node. Expansions are significantly (usually, exponentially) reduced as one improves the ``quality'' of $h$. 
Next we discuss 
the challenges of using of using heuristic search over Linked Data.


\section{AI search to the rescue}
\label{sec:AISearch}

A* is one of the most simple and well-studied heuristic algorithms capable of solving path search problems like the one we described in the previous sections. In this section we study how to apply it to the problem of answering property paths over the Web.

The main difference between A* and the algorithms described earlier is that the search frontier is a priority queue where the priority is given by $f(s)$, a function that estimates the cost of a solution that passes through $s$ \cite{HartNR68}. A high-level description of A* is as follows. At initialisation, the initial node is added to the \Frontier queue. A* now repeats the following loop: first, it extracts a node with the highest priority from \Frontier. It returns $s$ if it is a goal state; otherwise, it expands $s$ to obtain its neighbours, adds them to \Frontier and continues execution. 
Next we give a formal description of A*.

The search graph of A* is 
\emph{implicitly} described by (1) a start node $\sstart$; (2) a set of actions $Act$; (3) a partial successor function $Succ$, such that $Succ(a,s)$, if defined, returns a set $S$ of successor nodes; (4) a goal condition, which is a boolean function over nodes---\emph{goal nodes} are those nodes for which this function returns true; (5) a non-negative cost function $c$ between successor nodes.  The objective of the algorithm is to find a path from $\sstart$ to a goal node.

An additional argument required by A* is a heuristic function $h$, which is a non-negative function over nodes such that $h(s)$ is an estimate of the cost of a path that starts in $s$ and reaches a goal node. The heuristic is key to the performance of A*. An empirically well-known fact is that as $h$ is more accurate, time savings can be very big because expansions are significantly reduced. It can be proven that when $h$ is \emph{admissible}, that is, for every $s$ it holds that $h(s)$ does not overestimate the cost of any path from $s$ to a goal node, then A* finds a minimum-cost path from \sstart to a goal node.


\begin{algorithm}[t]
  \caption{The A* Algorithm}\label{alg:astar}

\Proc{A*}{
    $\Closed\leftarrow \text{empty set}$\;
    $\Open \leftarrow \text{empty priority queue ordered by $f$ attribute}$\;
    $g(\sstart)\leftarrow 0$\;
    $f(\sstart)\leftarrow h(\sstart)$\;
    Insert $\sstart$ into $\Open$\;
    \While{$\Open\neq\emptyset$}{
      Extract $s$ from $\Open$\; \label{alg:astar_extract}
      \If{$s$ is a goal node}{
        $\Return$ $s$ or add $s$ to list of solutions  \label{alg:return}
      }
      $\Expand(s)$ \label{alg:astar_call_expand}
    }
  }
  \Proc{\Expand$(s)$}{
    Insert $s$ into \Closed\;
    \Foreach{$a$ in $Act$ such that $\Succ(a,s)$ is defined}{
     \Foreach{$s'$ in $\Succ(a,s)$}{
      $t\leftarrow s'$\; \label{alg:generation}
      \If{$t$ is not in $\Seen$}{
        Add $t$ to $\Seen$ \; \label{alg:store}
        $g(t)\leftarrow \infty$
      }
      $cost\leftarrow g(s)+c(s,t)$\;
      \If{$cost<g(t)$}{
        $g(t) \leftarrow cost$\;
        $f(t) \leftarrow g(t)+h(t)$\;
        $parent(t) \leftarrow \angled{s,a}$\;
        \If{$t$ is a goal and $f(t)\leq f(top(\Open))$}{
          \Return $t$ or add $t$ to list of solutions \label{alg:secondreturn}
        }
        \lIf{$t\not\in\Open$}{
          Insert $t$ in $\Open$
        }
        \lElse{
          Update priority of $t$ in $\Open$
        }
      }
    }
    }
  }
  
\end{algorithm}

Algorithm~\ref{alg:astar} shows a pseudo-code for A*. The priority function is defined as $f(s)= g(s)+h(s)$, where $h$ is the heuristic function defined above and $g(s)$ is the cost of the best path found so far towards $s$. In an implementation of A*, a hash table is used to store nodes that have been generated in an expansion (cf. Line~\ref{alg:store}), and $parent$-, $g$-, $h$-, and $f$- values are stored as properties of $s$. 

A final and important observation is that A* can be easily modified to return a sequence of answers, instead of a single one. In this case, we simply modify the return statement in Line~\ref{alg:return} by something that adds $s$ to a list. 

\smallskip
\noindent
\textbf{Using A* for computing property paths}.
Let us show how we use A* to solve {\sc PP\_over\_the\_Web}. 
That is, given as inputs a property path $e$ and a starting IRI $u$, we look for all $v$ such that 
$(u,v) \in \sem{e}_{G_\text{Web}}$. Let $A_e = (Q_e,\I^\pm,q^0_e,F_e,\delta_e)$ be the automata over $\I^\pm$ that is equivalent to $e$. 
Recall that (see Lemma \ref{lema-query-eval} and Section \ref{sec:ComputingPathQueries}) we can reduce this problem to searching for all nodes $(v,q_f)$, for an IRI $v$ and a state $q_f \in F_e$, over the graph $G_P \subset G_\text{Web} \times A_e$ 
such that there is a path from $(u,q^0_e)$ to $(v,q_f)$.
In turn, this problem can be seen as an A*  description where 
\textit{(1)} the start node $\sstart$ is $(u,q^0_e)$;
\textit{(2)} the set $Act$ of actions corresponds to IRIs in $\I^\pm$;
\textit{(3)} the partial successor function $\Succ$ corresponds to the edges of $G_\text{Web} \times A_e$, that is, 
if $s=(u,q)$, we say  $(u',q')\in\Succ(a,s)$ if both $(u,a,u')\in adoc(u)$, and $(q,a,q')$ is a transition in $A_e$, or if both 
$(u',a,u)\in adoc(u)$, and $(q,a^-,q')$ is a transition in $A_e$;
\textit{(4)} a node $s = (v,q)$ is a goal if $q \in F_e$; and 
\textit{(5)} the cost function is $1$ for each pair of nodes connected by $\Succ$.


\smallskip
There is an important subtlety that distinguishes our algorithm from classical A* applications. 
Just as in the case of BFS and DFS, the successors of $(u,q)$ must be obtained by dereferencing an IRI (using, for example, the function $\textit{Neighbours}$ from Algorithm \ref{bfsdfs}). This again means that the most costly operation 
is the expansion of new successor nodes, and as such any implementation of A* must try their best to find a way of reducing this bottleneck. We explain how to do this in Section \ref{sec-parallel}.
But before, let us see how to choose a good heuristic function in our scenario. 

\subsection{A Heuristic for Navigational Queries}
Heuristic functions are essential for the performance of A*. We also want A* to be optimal, so our heuristic must be admissible, that is, it should not overestimate the cost of path to a goal node. 

Let $A$ be an automaton over $\I^\pm$. Our heuristic for this problem is defined as follows: for all nodes $(v,q)$ 
over $\I \times Q_e$, where $Q_e$ are the states of $A_e$, we define $h((v,q))$ as the 
minimum distance from $q$ to a final state of $A_e$ (and as $\infty$ if no path from $q$ to a final state exists).  
To illustrate our heuristic consider Figure~\ref{fig:sampleAnswer}, corresponding to the automaton of the query for papers of the coauthor reach of M. Stonebraker introduced in Section 3.3. 
Then we define $h(u,q_1)=2$, $h(u,q_0)=1$, and $h(u,q_2)=0$, for every $u \in \I$. 
Usually $h(u,q)$ is implemented as a simple lookup in a table. Given an automaton $A$ we denote the heuristic defined as described above by $h_{A}$.

Our heuristic $h_{A_e}$ is admissible for each property path $e$, as long as $A_e$ is the minimum NFA for $e$. To see this, 
note that the minimum number of actions required to reach a goal node from node $(u,q)$ cannot exceed the number of edges of a shortest path between the automaton state $q$ and a final state. This is because each successor  $(u',q')$ of node $(u,q)$ must be such that there is an edge between $q$ and $q'$ in the automaton's graph. 

\begin{figure}
\begin{center}
\begin{tikzpicture}[>=stealth',auto,node distance=1.7cm]
  \node[initial,state,minimum size=0pt] (q0)      {$q_0$};
  \node[state,minimum size=0pt]         (q1) [above=0.6cm of q0]  {$q_1$};
  \node[state,accepting,minimum size=0pt]         (q2) [right=2.2cm of q0] {$q_2$};

  \path[->]  (q0)  edge [bend left] node[left] {\scriptsize $\texttt{dc:creator}^-$} (q1);
  \path[->]  (q1)  edge [bend left] node[right] {\scriptsize $\texttt{dc:creator}$} (q0);
  \path[->]  (q0) edge  node[above] {\scriptsize $\texttt{dc:creator}^-$} (q2);
\end{tikzpicture}
\end{center}
\vspace*{-10pt}
\caption{An automaton finding papers of the co-authors of M.~Stonebraker.}\label{fig:sampleAnswer}
\label{fig-aut}
\end{figure}
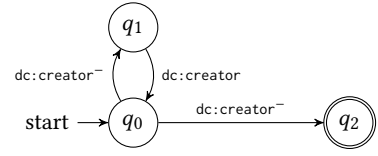






\subsection{Theoretical Guarantees}
A well-known property of A* is that is finds cost-optimal (i.e., shortest) paths. Here we provide an optimality result of the same sort. 
Now, because the function $\adoc(u)$ is not necessarily guaranteed to return all triples containing $u$, we cannot show optimality 
over the entire Web, but rather only over the graph we have already discovered, that we denote by $\Gastar$. 

Formally, given an execution of A*, the labelled graph $\Gastar \subset G_\text{Web} \times A_e$ contains the edge $(s,a,s')$ iff (1) $s\in Closed$, and (2) $s'\in Succ(a,s)$. 
Notice that this correspond to the product of $A_e$ with the graph that contains all triples present in any of the documents that have been retrieved so far in our computation. 
The following is an optimality result both for BFS and A* run with our property-path heuristic.



\begin{theorem}\label{thm:optimality}
  Let \Gastar be defined as above from a run of A* that has returned $N$ answers with either $h=h_{A_e}$ or $h=0$. Let $\pi_k$ be path found to the $k$-th solution found by A*, for any $k\in \{1,\ldots,N\}$. Furthermore, let $c_k$ be the length of the $k$-th shortest path from \sstart to any goal state over \Gastar. Then the cost of $\pi_k$ is $c_k$.
\end{theorem}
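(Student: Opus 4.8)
The plan is to reduce the statement to the textbook optimality analysis of A* under a \emph{consistent} heuristic, except that the analysis must be localized to the discovered subgraph \Gastar rather than the full product $G_\text{Web}\times A_e$. First I would verify consistency of both heuristic choices. For $h=0$ this is trivial (and here A* with unit costs is exactly BFS, which is why the claim also covers BFS). For $h=h_{A_e}$, write $s=(u,q)$ and $s'=(u',q')$; whenever $s'\in\Succ(a,s)$ the successor relation forces an automaton transition from $q$ to $q'$, so $\mathrm{dist}(q,F_e)\le 1+\mathrm{dist}(q',F_e)$, i.e.\ $h_{A_e}(s)\le c(s,s')+h_{A_e}(s')$ because $c\equiv 1$. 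Consistency then lets me invoke the standard facts that nodes are extracted from the open list in non-decreasing order of $f$, and that the $g$-value of a node is optimal the moment it is extracted.

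The key localization step is to identify \Gastar with the edges that A* actually relaxes. By definition the edge $(s,a,s')$ belongs to \Gastar exactly when $s\in Closed$ and $s'\in\Succ(a,s)$; but A* relaxes the out-edges of $s$ precisely inside the \textsc{Expand} procedure, which is executed exactly when $s$ enters $Closed$. Hence the relaxed edges are exactly the edges of \Gastar, and every shortest path to a node inside \Gastar has all of its internal vertices in $Closed$. Consequently the usual consistent-heuristic guarantee, read off the seen graph, becomes: writing $d(t)$ for the length of a shortest path from \sstart to $t$ inside \Gastar, the $g$-value assigned to a node upon extraction equals $d(t)$, and the chain of $parent$ pointers from $t$ traces a path of that length. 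Since a goal node $(v,q_f)$ has $h=0$, its $f$-value coincides with $g$ and hence with $d$.

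It remains to align the emitted solutions with the distance-sorted list of reachable goals. As extractions occur in non-decreasing $f$-order and $f=g$ on goals, any goal emitted from the main loop (Line~\ref{alg:return}) is emitted along a shortest path and in its correct position of the order. The delicate case is the early emission inside \textsc{Expand} (Line~\ref{alg:secondreturn}), guarded by $f(t)\le m$ where $m$ is the smallest priority currently in the open list. I would argue its safety from $f$-monotonicity: if a strictly shorter path from \sstart to $t$ existed in the \emph{final} \Gastar, its last edge would leave a node $w$ with $g(w)<g(t)-1$ and, by consistency ($h(w)\le 1$ as $w$ is adjacent to the goal $t$), with $f(w)\le g(w)+1<g(t)=f(t)\le m$; such a $w$ would already have been extracted and expanded, so the edge $w\to t$ would already have been relaxed and $g(t)$ already lowered, a contradiction. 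The same monotonicity forbids an as-yet-unreturned goal cheaper than the one being emitted. Granting this, goals are emitted in non-decreasing order of $d(\cdot)$, each along a shortest path, so the cost of the $k$-th emitted path $\pi_k$ equals the distance of the $k$-th goal in this order, which is exactly $c_k$.

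The step I expect to be the main obstacle is precisely the correctness of this early-return optimization: one must rule out the scenario where A* commits to a goal at cost $g(t)$, only for a strictly cheaper route to the same goal---or an out-of-order cheaper goal---to surface later as \Gastar keeps growing. Turning the informal $f$-monotonicity sketch above into a rigorous argument, and separately checking that each reachable goal node is emitted exactly once so that indexing $\pi_k$ against the sorted multiset of goal distances is well defined, is where the real work lies; everything else is the standard consistent-heuristic analysis transplanted from the full graph onto \Gastar.
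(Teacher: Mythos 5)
Your proposal is correct, and it takes a genuinely different (and more complete) route than the paper's own argument, which is only a two-sentence sketch: the paper inducts on the solution index $i$, asserting that the $i$-th answer would be the \emph{first} answer returned if all previously found goals were demoted to non-goals, and then invokes the classical fact that A* with an admissible heuristic returns an optimal first solution, applied to the discovered graph $\Gastar^i$. You instead verify that both heuristics are \emph{consistent}, observe that the edges of \Gastar are exactly the edges relaxed during expansion (so the monotone-$f$ / optimal-$g$-on-extraction machinery can be transplanted onto \Gastar), and argue directly that goals are emitted in non-decreasing order of their \Gastar-distance. This buys two things the paper's sketch glosses over: (i) it explicitly justifies the early-return guard $f(t)\le f(top(\Frontier))$ in Line~\ref{alg:secondreturn}, which the reduction to ``first solution is optimal'' does not obviously cover; and (ii) it closes the gap between the theorem's statement, which measures $c_k$ against the \emph{final} \Gastar, and the sketch's conclusion, which is only about $\Gastar^i$ --- your monotonicity argument is exactly what rules out a shorter path to an already-emitted goal surfacing as \Gastar keeps growing. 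The step you flag as the remaining obstacle does go through under consistency (any node whose eventual $f$-value is below the current minimum of \Frontier must already be closed, so the hypothetical cheaper last edge $w\to t$ would already have been relaxed, contradicting the current $g(t)$), so the plan is sound; the paper's induction is lighter only because it uses the textbook optimality theorem as a black box.
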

\begin{proof}[sketch]
Let $\Gastar^i$ denote \Gastar right after the $i$-th solution has been returned. We prove by induction that the $i$-th solution found by A* would be the first solution found by A* if we were to mark as non-goals all solutions found prior to the $i$-th solution. Now we use the fact that the heuristic is admissible and thus the solution found is the $i$-th optimal over $\Gastar^i$.
\end{proof}

In practice, as we see later on, BFS runs slower than A* with our heuristic. Interestingly, we can prove that A* is better in the sense that BFS has to expand at least as many nodes as A*.
\begin{theorem}\label{thm:dominance}
  Let $(u,e)$ be an IRI and a property path. Then every node expanded by A*, used with $h_{A_e}$, is also expanded by BFS.
\end{theorem}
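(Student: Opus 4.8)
The plan is to show that BFS is precisely the instance of A* obtained by taking the null heuristic $h_0\equiv 0$, and then to invoke the classical principle that a \emph{more informed} (pointwise larger) consistent heuristic can only shrink the set of expanded nodes (see \cite{Pearl84}). Concretely, both algorithms explore the same implicitly-defined graph \Gastar\ with unit edge costs and the same goal test; BFS extracts nodes from \Frontier\ in FIFO order, which for unit costs coincides with non-decreasing $g$-value, i.e.\ with the extraction order of A* run with $f=g$ (since $h_0\equiv 0$). So it suffices to compare A* under the two heuristics $h_0$ and $h_{A_e}$, with $h_0\le h_{A_e}$ everywhere.

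First I would verify that $h_{A_e}$ is \emph{consistent}: for every edge $(s,a,s')$ of \Gastar\ the automaton admits a matching transition between the corresponding states, so the shortest state-to-final distance obeys $h_{A_e}(s)\le 1+h_{A_e}(s')=c(s,s')+h_{A_e}(s')$ by the triangle inequality on distances. Consistency guarantees that whenever A* expands a node $s$ it does so with $g(s)=g^{\ast}(s)$, the true shortest-path length from \sstart\ to $s$ in \Gastar. The second ingredient is the standard A* invariant that a node whose $f$-value exceeds the optimal solution cost $C^{\ast}$ is never selected for expansion, because \Frontier\ always contains an ancestor of the goal on an optimal path with $f\le C^{\ast}$. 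Hence every node $s$ expanded by A* with $h_{A_e}$ satisfies $g^{\ast}(s)+h_{A_e}(s)\le C^{\ast}$.

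The heuristic $h_{A_e}$ now does the real work. Any expanded node is a non-goal (goals are returned, never expanded), and for a non-goal the automaton component is non-final, so $h_{A_e}(s)\ge 1$. Substituting into the bound gives $g^{\ast}(s)\le C^{\ast}-1$: every node A* expands lies strictly below the depth $C^{\ast}$ at which the shallowest goal sits. On the BFS side, since BFS explores \Gastar\ level by level, it expands every node of depth at most $C^{\ast}-1$ before it can reach and return a goal at depth $C^{\ast}$. Combining the two inclusions, the set of nodes A* expands is contained in $\{s:g^{\ast}(s)\le C^{\ast}-1\}$, which is in turn contained in the set of nodes BFS expands, which is the claim.

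The delicate point, and the part I expect to require the most care, is the \emph{boundary level} $g^{\ast}(s)=C^{\ast}-1$ together with ties at $f=C^{\ast}$. The quantitative gain $h_{A_e}\ge 1$ is exactly what pushes all A*-expanded nodes off the critical depth $C^{\ast}$ and into the region $g^{\ast}\le C^{\ast}-1$, so that no tie-breaking among $f=C^{\ast}$ nodes can make A* expand something BFS misses; conversely I must ensure BFS genuinely expands \emph{every} depth-$(C^{\ast}-1)$ node before it returns, which is immediate if the goal test is applied on extraction and needs only a short symmetry argument for the generation-time test used in Algorithm~\ref{bfsdfs} (A* applies an analogous early return in Line~\ref{alg:secondreturn}). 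Finally, the same argument relativised to the $k$-th solution cost $c_k$ in place of $C^{\ast}$ extends the statement to runs returning several answers, matching the setting of Theorem~\ref{thm:optimality}.
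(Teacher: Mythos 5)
Your proposal is correct and follows essentially the same route as the paper: the entire argument hinges on the observation that $h_{A_e}(s)\geq 1$ on every non-goal node, after which the paper simply cites the classical dominance theorem (Theorem~7 of \cite{Pearl84}) for A* with $h=0$ versus a strictly more informed admissible heuristic. You have merely unpacked that cited theorem (BFS as A* with the null heuristic, consistency of $h_{A_e}$, the $f\leq C^{\ast}$ expansion bound, and the surely-expanded argument that strictness pushes A*'s expansions below the critical depth), which is a faithful expansion of the paper's one-line proof rather than a different approach.
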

\begin{proof}
We observe that $h_{A_e}(s)>0$ for every non-goal state $s$. The result now follows from Theorem~7 in \cite{Pearl84}.
\end{proof}

\section{Optimising query execution}
\label{sec:Optimisations}



In this section we provide several optimisations to the base algorithms presented in Section \ref{sec:ComputingPathQueries} and Section \ref{sec:AISearch}. We start by describing how parallel expansions can be used in order to reduce the execution times of our search algorithms.
We then explain what are the current shortcomings of the Linked Data infrastructure and propose a way to overcome them using endpoints.
Finally, we discuss a way of tweaking the heuristic used in A* in order to both avoid unnecessary network requests and find answers sooner.


\subsection{Parallel Expansions}
\label{sec-parallel}
All of our search algorithms function in such a way that they select a set of nodes which will serve as the starting point in the next iteration, and then start the search from these nodes one by one.
An issue with this is that a request over the network---which on average takes more than a second--- is needed per each dereference. 

Instead of expanding one node at a time, our algorithms can benefit greatly from expanding multiple ones in parallel.
More specifically, we modify the algorithm to extract up to $k$ of nodes that could be at the top of the \Frontier, and expand them in parallel. $k$ is now a paramenter of the algorithms which can be understood as a \emph{degree of parallelism}.
To obtain $k$-BFS and $k$-DFS, we modify Algorithm~\ref{bfsdfs} such that Line \ref{search:extract} deals with up to $k$ top-valued nodes from $\Frontier$, and neighbours are computed for them in paralell. Similarly, $k$-A* is obtained by extracting up to $k$ nodes with the highest $f$-values from \Frontier in Line \ref{alg:astar_extract} of Algorithm \ref{alg:astar}, and expanding them all in parallel in line \ref{alg:astar_call_expand}. In all 3 algorithms, after all successors are computed, we add them all together to \Frontier, in the same order that we would have, had the nodes been expanded sequentially. It is then not hard to see that optimality (Theorem~\ref{thm:optimality}) still holds for $k$-A* (and $k$-BFS).

In Section \ref{sec:Evaluation} we show that, depending on the degree of parallelism, the computation is sped up  tenfold in some instances. 

\subsection{Using The Endpoint Infrastructure}\label{ss-endpoints}
\label{sec-endpoint}
The evaluation algorithms presented in previous sections rely on the dereferencing mechanism of Linked Data and work under the assumption that when a specific IRI is dereferenced, we obtain all the triples mentioning such an IRI which reside on the server we are using. Unfortunately, it was shown repeatedly that this is generally not the case when working with Linked Data \cite{HoganUHCPD12,HoganG14}, which can lead to incomplete answers since many triples containing the dereferenced IRI might not be returned. This is particularly problematic when working with inverse links, as it is estimated that publishers include only about a half of the triples where the requested IRI appears as the object \cite{HoganUHCPD12}.

Many Linked Data providers also set up public SPARQL endpoints where users can query the dataset, so we can partially alleviate the lack of Linked Data infrastructure  
by relying on public SPARQL endpoints together with Linked Data. When evaluating property paths over Linked Data, we combine the two approaches and, each time we dereference an IRI, we also query the appropriate endpoint in order to obtain the triples mentioning the said IRI. Furthermore, we query the endpoint only asking for links in the appropriate direction. For instance, if our property paths needs to traverse the \texttt{author} edge forwards starting from an IRI \texttt{start}, we ask the query \texttt{SELECT ?x WHERE \{start author ?x\}} to the appropriate endpoint and similarly for the backwards edges.

\begin{figure*}
  \centering
  \begin{subfigure}
    \centering
    \includegraphics[width=.3\linewidth]{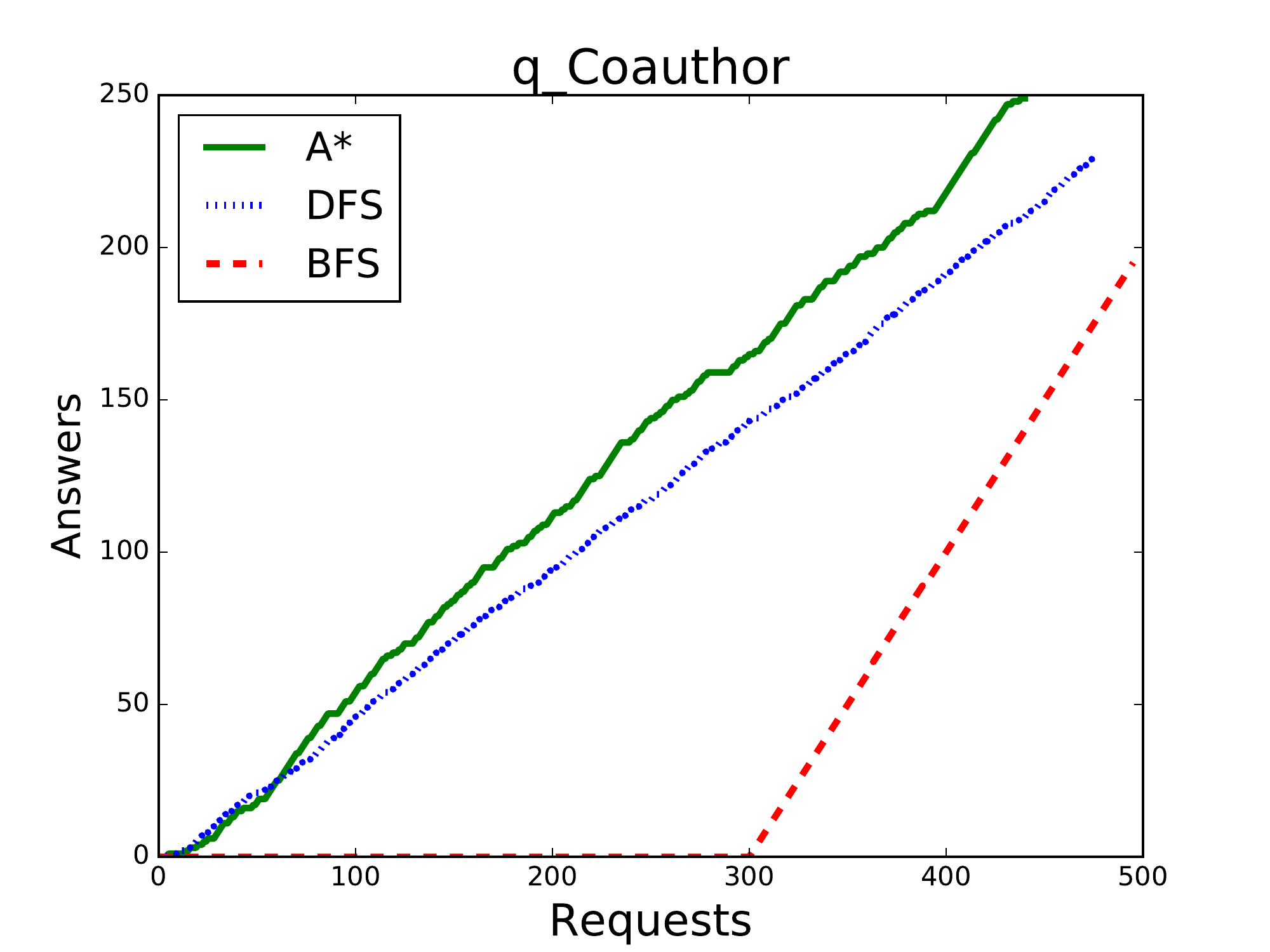}
  \end{subfigure}%
  ~
  \begin{subfigure}
    \centering
    \includegraphics[width=.3\linewidth]{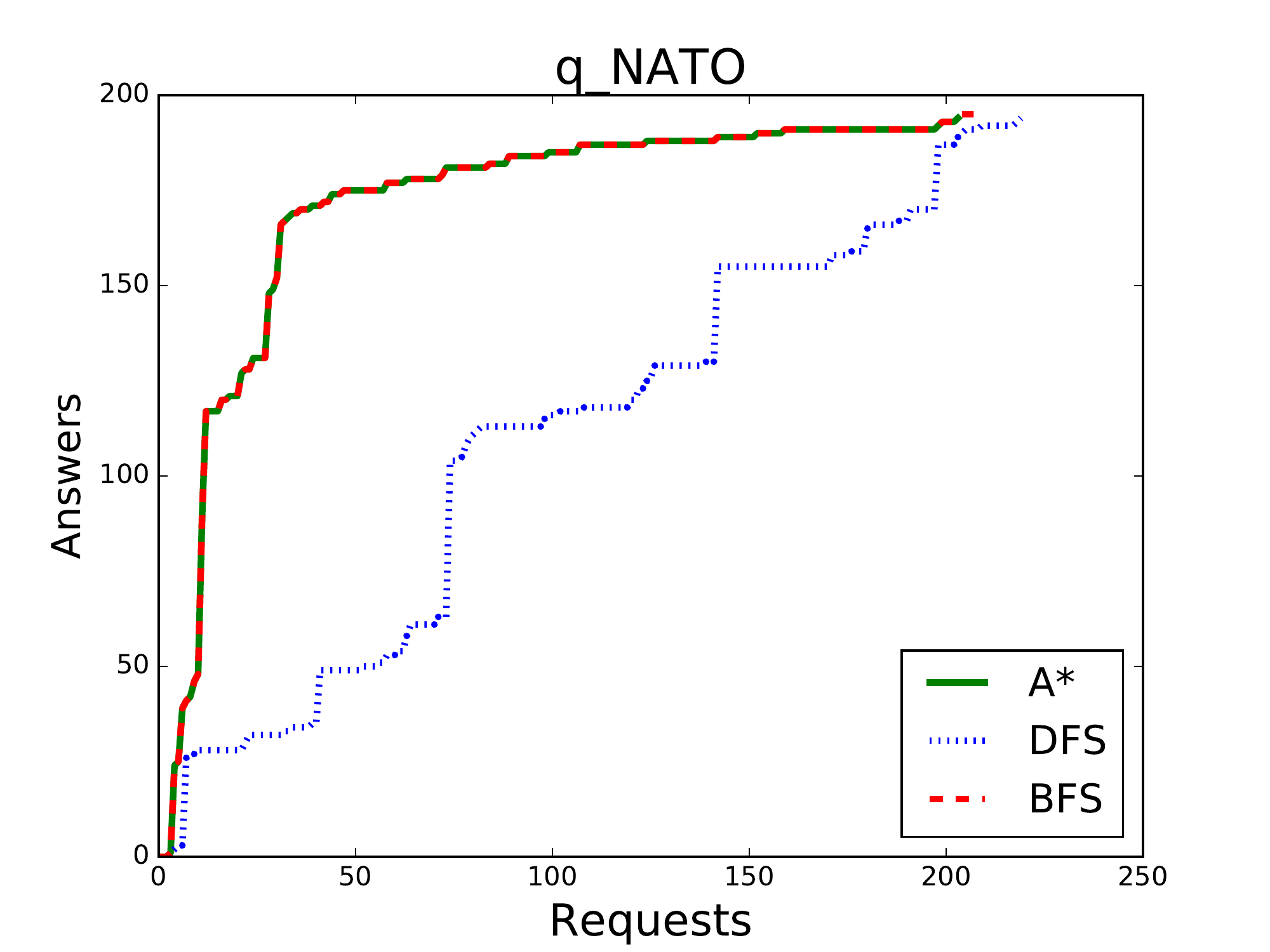}
  \end{subfigure}%
   ~  
  \begin{subfigure}
    \centering
    \includegraphics[width=.3\linewidth]{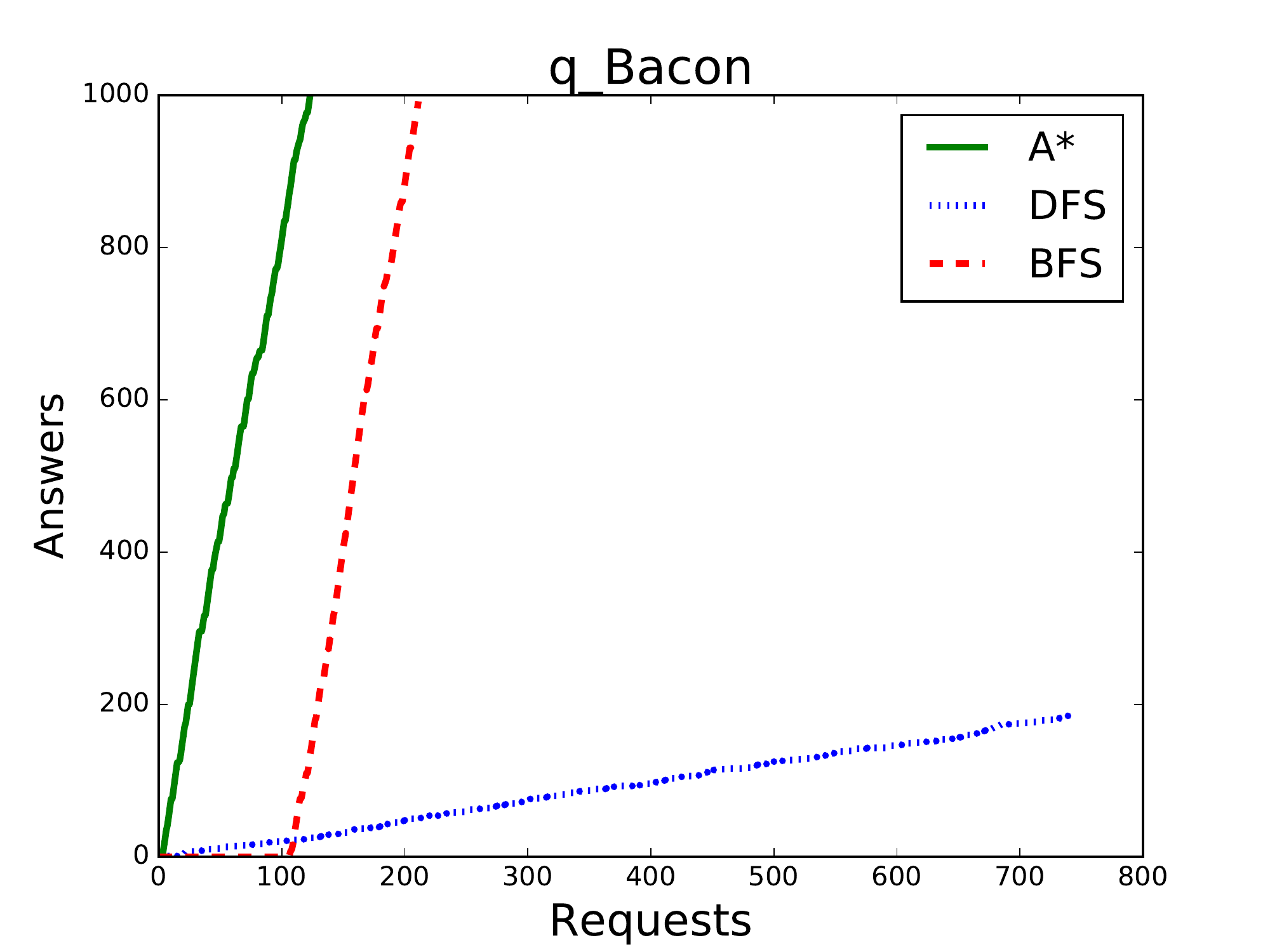}
  \end{subfigure}%
  \vspace{-15pt}
   \caption{A* minimises the requests needed to obtain answers of \qA, \qC and \qB}
  \label{fig:simple_ans-time}

\end{figure*}

\subsection{Minimising Network Requests}\label{ss-optheuristic}
We have argued above that dereferencing is an expensive operation. 
When A* is modified to find multiple answers, as we proposed above (i.e., by simply adding solutions to a list), some expansions may be carried out sooner than we would want, leading to unnecessary dereferencing. 
Indeed, our heuristic assigns the value $0$ to any node of the form $(u,q_f)$ with $q_f$ a final state (because the distance to a final state is $0$). 
Assuming $q_f$ has outgoing transitions, the standard A* algorithm would prioritise the expansions of those nodes over any other node with the same $f$ value, an operation that intuitively would take us farther from the goal.  


We can postpone these expansions by using a slightly different heuristic, defined as follows. Let $A = (Q,\Sigma,q_o,F,\delta)$ be an NFA. The \emph{pathmax} distance $\hat d(q,q')$ between two states is defined as 
$\hat d(q,q') = 1 + \min_{q \mid \delta(q,a) \text{ is defined}} d(q,q')$, if $\delta(q,a)$ is defined for at least some $a \in \Sigma$, or $\hat d(q,q') = \infty$ otherwise; and where $d$ is the usual graph distance between $q$ and $q'$. 
Then the pathmax heuristic $\hat h_A$ with respect to $A$ is defined as $\hat h_A((u,q)) = \min_{q_f \in F} \hat d(q,q_f)$, that is, the minimum pathmax distance from $q$ to any final state of $A$.  
This is a standard technique used by search algorithms in which a node may have to be re-expanded \cite{Russell92,Korf93}. Interestingly, one can see that the pathmax distance $\hat d$ coincides with the usual distance in all states of the automata except for the final states. We 
avoid early re-expansion of nodes with final states because the heuristic for them now corresponds to $1$ plus the minimum of the heuristic value of the neighbours of this state. 

\section{Experimental Evaluation}
\label{sec:Evaluation}

%
%
%
%
%

In this section we evaluate how an implementation of A* algorithm performs when executing property path queries over a real Web environment. 
To establish a baseline, we also compare the performance of A* with BFS and DFS, the only other algorithms proposed so far in the literature. 
%
We begin by presenting our experimental setup, the queries, and then compare how A* fares against BFS and DFS, showing that A* outperforms the other two algorithms on a regular basis.
Next we investigate the impact of parallel requests on our algorithms. 
As we see, adding parallelism reduces total runtime for all three algorithms, with A* remaining the 
most consistent. Interestingly, all algorithms tend to look more alike as more and more parallel requests are allowed.  
Finally, we discuss some real-world examples of the paths retrieved by our algorithms. 

\subsection{Experimental Setup}

We selected $11$ navigational queries that are inspired by previous benchmarks (see e.g. \cite{GubichevBS13,RSV15}). These queries 
are representative of several different features of property paths, ranging from easy, fixed-depth ones to queries using multiple 
star operations which are much harder to evaluate. Our queries target one or more of the following Linked Data domains: 
YAGO, a huge knowledge base extracting data from Wikipedia and various other sources \cite{YAGO}; DBPedia \cite{dbpedia}, one of the central datasets 
of the Linked Data initiative that also originates from Wikipedia; Linked Movie Database, the best known semantic database for movie information \cite{lmdb}; 
and the Linked Data domain of DBLP. 
Our implementations will always use the optimisation techniques presented in Section \ref{ss-endpoints} and Section \ref{ss-optheuristic}, while we assess the benefits of parallelism (Section \ref{sec-parallel}) separately.
As an example, below are $3$ of the $11$ queries we use. 
For complete details of the queries, results of all our runs, and implementation of our algorithms, please refer to our online appendix \cite{appendix}.

\noindent
\qA: The property path $(\texttt{coauthor}^-\cdot \texttt{coauthor})^*$ in the DBLP dataset, starting from the IRI of M. Stonebraker.  This property path looks for the IRI of all authors that are related to M. Stonebraker on DBLP, by a co-authorship path of arbitrary length. 

\noindent
\qC:  A property path that selects all places that host an entity dealing with a NATO member state, according to YAGO. 

\noindent
\qB: A property path that looks for the IRIs of actors having a finite Bacon-number\footnote{An actress has Bacon number 1 if she acted in the same movie as Kevin Bacon, and Bacon number $n$ if she acted with someone with Bacon number $n-1$.},  
and that navigates using links and/or IRIs present in any of YAGO, DBPedia or Linked Movie Database. This is an interesting query, as currently the only way to evaluate it is by means of our Linked Data approach (see \cite{BaierDRV16} for a discussion).



To assess our algorithm we use two indicators: the number of HTTP requests made to compute a fraction of the answers, and the time needed to compute them. In both cases we want to minimise the number of requests, or the amount of time needed to produce the answers. 
We note that the number of requests is a much better indicator on how the algorithm works: because HTTP requests take considerably more
time than all the other operations, the total time of computing our queries is essentially given by the number of requests performed by the algorithm.
This also rules out the dependence on parameters which we have no control over, such as the Internet traffic, or the availability of servers providing us with data. Thus, by focusing on requests we ignore latency differences that may persist even after taking several runs of the same query.

All experiments were run without an access to the data locally, relying solely on the Web infrastructure to retrieve the data needed at each step of the computation. 
The experiments were run on a Manjaro Linux machine with a i5-4670 quad-core
processor and 4GB of RAM. To avoid flooding servers with requests we only ran our search until we either found 1\,000 answers, retrieved more than 100\,000 triples from the server, or reached a 10-minute time limit. Each experiment was ran 10 times, and since the results were largely equivalent, we report the numbers of the latest execution. The source code for running the experiments is
available at \cite{appendix}.

\subsection{Heuristic Search Against BFS and DFS}

The general conclusion of our experiments is that A* both requires fewer requests and is faster than BFS and DFS. 
Before reporting our results in full, let us examine the runs of queries \qA, \qB and \qC presented above.   
Figure \ref{fig:simple_ans-time} shows the number of requests needed to compute a fraction of the total answers available 
for these queries. 
In particular, we see that both A* and DFS are the best choice for the query \qA, because they produce more answers using fewer HTTP requests 
(even though the quality of the answers produced by A* is arguably better -- see below). On the other hand, BFS requires  around 300 expansions 
to start producing answers, which results in a much slower throughput altogether. 
Next, both A* and BFS are the best choice for the query \qC. This is again expected, because this query requires less navigation and more shallow exploration. 
It is interesting to see that in this case A* really simulates the optimal BFS search. 
On the other hand, DFS wastes a lot of time exploring long paths and obtaining ``deep" answers.
Finally, in the case of \qB, A* is shown to strictly beat both BFS and DFS. In the case of BFS, this is mostly because A*'s heuristic allows a finer control on which links to explore, and the main detractor for DFS is that it starts exploring initial links which often require many requests before encountering a solution.

\smallskip 
\noindent
\textbf{Full results}. For reasons of space, we cannot report the remaining 8 experiments with the same detail, so instead we do the following. 
For each query, we analyse the complete behaviour of the answers vs.\ request and answers vs.\ time curves. We say that an algorithm \emph{dominates} the others if it is such that it returns at least as many answers as any other for 80\% of the range of requests (or time) for which we evaluate them.
For example, in Figure \ref{fig:simple_ans-time} we see that $A^*$ dominates the other algorithms for queries \qA and \qB, while in the case of \qC both $A^*$ and BFS dominate.
The total number of times each algorithm dominates (out of 11 queries) is shown below, for both the answers vs.\ request and answers vs.\ time curves (full details are found in our online appendix \cite{appendix}). Once again,  $A^*$ remains the most consistent option.
\begin{center}
  \begin{tabular}{l|lll}
    \hline
    Measure & A*  & BFS & DFS  \\ \hline
    Requests v/s Answers       &  11     & 3  & 4    \\
    Time v/s Answers         &  11     & 3   & 4   \\
  \end{tabular}
  \end{center}


\subsection{The Effect of Parallel Requests}
\label{subsec-experiments-parallel}
Next, we test the effect of allowing parallel requests in our algorithms, as presented in Section \ref{sec-parallel}. 
This optimisation goes a long way into tackling the slow latency of Web requests, one of the main problems of querying over the HTTP protocol. 
%
Indeed, HTTP requests are such an important bottleneck in our algorithm that allowing parallel requests essentially means parallelising the entire algorithm. Issuing parallel requests also soften up high latency pockets or temporary network problems.
Moreover, we can also expect the algorithms to be accelerated even further when the number of allowed requests is increased, simply because more requests essentially means more parallel instances of our algorithm and even more softening power. 
The other interesting observation is that, as we allow more parallel requests in our algorithms, all of A*, BFS and DFS start to look alike,
and in fact it is easy to see that all three algorithms are essentially equivalent in the limit where we issue an infinite number of requests at the same time.

To empirically test these observations, we issued new live runs of the 11 queries described in the previous sections, but this time using 
parallel versions of A*, BFS and DFS. To see the impact on the number of parallel threads allowed, we report experiments with 
a maximum of 10, 20, and 40 parallel threads. Before reporting the full results, let us start with comparing the results of the algorithm with no parallelism against the
one with 20 parallel threads. Figure \ref{fig-parallel} shows the time needed to compute the answers of query \qA, for all three algorithms on their non-parallel version 
and on their parallel version with a maximum of 20 threads. As we see, the time needed to compute the same amount of answers decreases by almost tenfold  in all three cases. 
Moreover, the parallel version of BFS now behaves almost as A* and DFS when computing the first 300 answers (it then reaches a stalemate because all shallow 
answers have already been discovered). 

\begin{figure}[t]
  \centering
  \includegraphics[width=.75\linewidth]{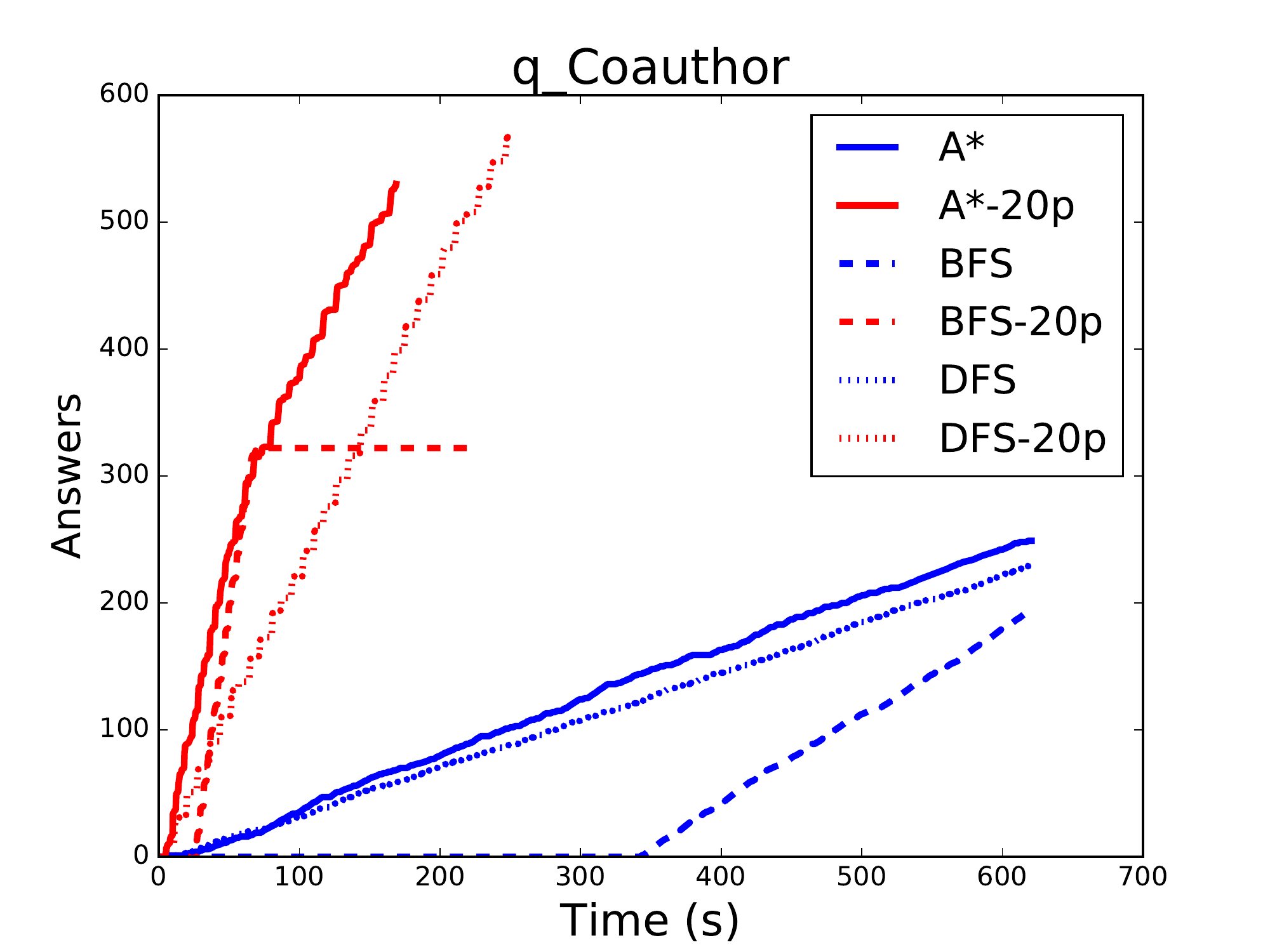}
  \vspace{-10pt}
  \caption{Answers over time on \qA. The parallel versions (in red) are much faster than the non-parallel ones. 
  }
  \label{fig-parallel}
\end{figure}

\smallskip 
\noindent
\textbf{Full results}. As expected, the time taken to compute answers decreases drastically (the behaviour is the same as for \qA). Perhaps more interestingly, we 
focus on how algorithms change when more parallel threads are allowed. In order to do this, we repeat the same reports made in the previous subsection, but this time for 
different levels of parallelism. More precisely, for each of our 11 queries and 4 different thread counts we report which of A*, BFS or DFS dominates in the time needed to compute the answers. 
As we see 
as more parallelism is allowed into the algorithms, both BFS and DFS start becoming more competitive compared to A*.
\begin{center}
  \begin{tabular}{l|lll}
    \hline
    Max parallel calls & A*  & BFS & DFS  \\ \hline
    1           &  11     & 3   & 4    \\
    10          &  7     & 3   & 3    \\
    20          &  7     & 3   & 3    \\
    40          &  6     & 4   & 5    \\
  \end{tabular}
\end{center}




\subsection{Returning paths}

So far we have only talked about finding pairs of nodes that form the answer of a property path query, as dictated by the SPARQL standard. 
However, our search algorithms can also be used to compute the entire \emph{path} between two nodes, and in fact we can get them 
at a very marginal cost: we already need to keep track of all the expansions, so we can produce paths simply by returning the IRIs corresponding to each of the 
requests made by our algorithm. 

Paths can be used as a justification for the answers, or to continue extracting more information afterwards. In the case of queries over Linked Data, 
we can even use the paths of queries to gather information about the structure of the Web itself.
For these reasons 
returning paths is a very sought-after functionality of graph query languages, and is present for example in the popular Neo4j engine \cite{neo4j}. 
Unfortunately, a language capable of returning (all) paths, or even (all) simple paths, is bound to be very complicated to evaluate \cite{ACP12,LM12}, and this is 
the reason why the SPARQL standard does not include such a functionality. 
In our case we have a natural workaround for this issue, as our search focuses on shortest paths, which are known to be easier to evaluate than simple paths. 


\begin{figure}
\centering
\scriptsize
\fontsize{0.25cm}{0.1cm}\selectfont{
\begin{verbatim}
dblpAuthor:Michael_Stonebraker
  ^dc:creator  dblpPub:conf/acm/MuthuswamyKZSPJ85
   dc:creator  dblpAuthor:Matthias_Jarke
   rdfs:label  "Matthias Jarke"
 
 
dblpAuthor:Michael_Stonebraker
  ^dc:creator  dblpPub:conf/dbvis/AikenCLSSW95
   dc:creator  dblpAuthor:Mybrid_Spalding
   rdfs:label  "Mybrid Spalding"


dblpAuthor:Michael_Stonebraker
  ^dc:creator  dblpPub:conf/vldb/StonebrakerABCCFLLMOORTZ05
   dc:creator  dblpAuthors:Adam_Batkin
   rdfs:label  "Adam Batkin"
\end{verbatim}
}
  \vspace{-8pt}
\caption{Paths for the 10th, 50th, and 200th answers found by A* on \qA.}
\label{fig:qAPaths}
\end{figure}

As an example of the usefulness of paths, 
it was by analysing paths that we inferred that A* 
normally produces better answers than DFS (because the paths are shorter).  
As an illustration, Figure \ref{fig:qAPaths}
presents paths witnessing the answers $10$, $50$, and $200$ of a run of the
query \qA with A*. From the query itself all that we can say is that
these three researchers are connected to M. Stonebraker by a coauthorship path
of arbitrary length. However, by looking at the paths we now know that they are
direct coauthors. On the other hand, the length of paths retrieved by DFS are
going to be much higher. For one run of \qA with DFS the lengths of 
the answers 10, 50 and 200 were respectively 14, 74, 312.

\section{Conclusions and future work}
\label{sec:Conclusions}

This paper presents the first fundamental study of the problem of computing property paths over the Web. We showed how to cast query answering as an 
AI search problem, and provided an optimal algorithm based on the classical A* algorithm. We provide strong theoretical and practical evidence that A* is a better 
alternative than both BFS and DFS in the context of Linked Data, and this can be sped up even further by allowing parallel execution threads. 

In terms of future work, we identify three main challenges we plan on tackling.

\smallskip
\noindent
\textbf{Using triple pattern fragments}. As noted in Section \ref{sec:Optimisations}, there are some issues with the Linked Data infrastructure; most notably, it does not provide all the information one would expect when dereferencing IRIs. While it is possible to alleviate this issue by using endpoints, since their uptime can be erratic, it was recently suggested that a more lightweight infrastructure of triple pattern fragments \cite{TPF} would be more appropriate for the task. In the future we plan to test how using triple pattern fragments affects the performance and accuracy of our algorithms when compared to the standard endpoint infrastructure.

\smallskip
\noindent
\textbf{Answering NautiLOD and LDQL queries with A*}. NautiLOD \cite{FPG15} is a traversal-based language proposed as 
an option to SPARLQ when querying Linked Data, in which one has more finer control on how is the Web going to be traversed. In the same spirit, LDQL \cite{HPe15} is another language aimed at controlling how data is to be retrieved, albeit much less powerful than NautiLOD. The interesting observation is that we can also cast the query evaluation problem for these languages as a search problem, and thus A* should also provide optimal query answering algorithms. In fact, the algorithm 
proposed in \cite{FPG15} is essentially what we define here as $k$-DFS, so one can naturally suspect that A* should provide a better behaviour.

\smallskip
\noindent
\textbf{A* in local computations}. Although we based our investigation in the context of Linked Data, there is some evidence that  our approach might have potential in the 
classical setting where data is available locally. The main reason is the fact that the currently available property path evaluation algorithms demand a lot of resources, especially when dealing with property paths that use the Kleene star operator, and current systems cannot easily cope with these requirements \cite{BaierDRV16}. On the other hand, we have seen that the memory usage of an A*-based algorithm is directly dependant on the amount of answers 
that need to be computed, and each answer requires an almost negligible amount of additional memory. This suggests that, in those cases when we do not need all the answers, an approach based on A* might be a better option.

\bigskip

\bibliographystyle{ACM-Reference-Format}
\bibliography{ref}

\end{document}